\DeclareSymbolFont{largesymbolsA}{U}{txexa}{m}{n}
   	\newcommand{\Pref}[1][]{
		\ifthenelse{\equal{#1}{}}{\mathrel \succsim}{\mathop{\succsim_{#1}}}
	}                                      
	\newcommand{\sPref}[1][]{                  
		\ifthenelse{\equal{#1}{}}{\mathrel \succ}{\mathop{\succ_{#1}}}
	}                                          
	\newcommand{\Indiff}[1][]{                 
		\ifthenelse{\equal{#1}{}}{\mathrel \sim}{\mathop{\sim_{#1}}}
	}
	\newcommand{\prefset}[1][]{\ifthenelse{\equal{#1}{}}{\mathcal{\succsim}}{\mathcal{\succsim}_{#1}}}
	\newcommand{\wpref}[1][] {\ifthenelse{\equal{#1}{}}{\mathrel{R}}{\mathrel{R_{#1}}}}
	\newcommand{\spref}[1][] {\ifthenelse{\equal{#1}{}}{\mathrel{P}}{\mathrel{P_{#1}}}}
	\newcommand{\indiff}[1][]{\ifthenelse{\equal{#1}{}}{\mathrel{I}}{\mathrel{I_{#1}}}}
	\newcommand{\ml}[1][]{\ifthenelse{\equal{#1}{}}{\mathit{ML}}{\mathit{ML}(#1)}}
	\newcommand{\sml}[1][]{\ifthenelse{\equal{#1}{}}{\mathit{SML}}{\mathit{SML}(#1)}}
	\newcommand{\sd}[1][]{\ifthenelse{\equal{#1}{}}{\mathit{SD}}{\mathit{SD}(#1)}}
	\newcommand{\rsd}[1][]{\ifthenelse{\equal{#1}{}}{\mathit{RSD}}{\mathit{RSD}(#1)}}
	\newcommand{\st}[1][]{\ifthenelse{\equal{#1}{}}{\mathit{ST}}{\mathit{ST}(#1)}}
	\newcommand{\bd}[1][]{\ifthenelse{\equal{#1}{}}{\mathit{BD}}{\mathit{BD}(#1)}}
	\newcommand{\pc}[1][]{\ifthenelse{\equal{#1}{}}{\mathit{PC}}{\mathit{PC}(#1)}}
	\newcommand{\dl}[1][]{\ifthenelse{\equal{#1}{}}{\mathit{DL}}{\mathit{DL}(#1)}}
	\newcommand{\ul}[1][]{\ifthenelse{\equal{#1}{}}{\mathit{UL}}{\mathit{UL}(#1)}}	
	\newcommand{\uc}[1][]{\ifthenelse{\equal{#1}{}}{\mathit{UC}}{\mathit{UC}(#1)}}
	\newcommand{\bor}[1][]{\ifthenelse{\equal{#1}{}}{\mathit{BOR}}{\mathit{BOR}(#1)}}
	\newcommand{\serdict}[1][]{\ifthenelse{\equal{#1}{}}{\sigma}{\sigma(#1)}}
\renewcommand{\epsilon}{\varepsilon}
\begin{document}
	
\title{Welfare Maximization Entices Participation}

\author{Florian Brandl \qquad Felix Brandt \qquad Johannes Hofbauer\\
Technische Universit\"at M\"unchen\\
\texttt{\small \{brandlfl,brandtf,hofbauej\}@in.tum.de}
}

\date{}

\maketitle

\begin{abstract}
We consider randomized mechanisms with optional participation. Preferences over lotteries are modeled using skew-symmetric bilinear (SSB) utility functions, a generalization of classic von Neumann-Morgenstern utility functions.
We show that every welfare-maximizing mechanism entices participation and that the converse holds under additional assumptions. 
Two important corollaries of our results are characterizations of an attractive randomized voting rule that satisfies Condorcet-consistency and entices participation. This stands in contrast to a well-known result by \citet{Moul88b}, who proves that no deterministic voting rule can satisfy both properties simultaneously.
\end{abstract}

	\section{Introduction}\label{sec:introduction}

	Let $\mathbb{N}=\{1,2,\dots\}$ be a countable set of agents and $\mathcal{F}(\mathbb{N})$ the set of all finite and non-empty subsets of $\mathbb{N}$. Moreover, $A$ is a finite set of alternatives and $\Delta(A)$ the set of all \emph{lotteries} (or \emph{probability distributions}) over $A$. 
	A lottery is \emph{degenerate} if it puts all probability on a single alternative.
	We assume that preferences over lotteries are given by \emph{skew-symmetric bilinear (SSB) utility} functions as introduced by \citet{Fish82c}. 
	An SSB function $\phi$ is a function from $\Delta (A) \times \Delta (A) \rightarrow \mathbb{R}$ that is skew-symmetric and bilinear, 
	\ie
	\begin{align*}
	\phi(p,q) &= - \phi(q, p)\text,\\
	\phi(\lambda p + (1-\lambda) q, r) &= \lambda \phi(p,r) + (1-\lambda) \phi(q,r)\text.
	\end{align*}
	for all $p,q\in \Delta(A)$ and $\lambda\in\mathbb{R}$.
	Note that, by skew-symmetry, linearity in the first argument implies linearity in the second argument and that, due to bilinearity, $\phi$ is completely determined by its function values for degenerate lotteries.
	SSB utility theory is more general than the linear expected utility theory due to \citet{vNM47a}, henceforth vNM, as it does not require independence and transitivity \citep[see, \eg][]{Fish88a,Fish84c,Fish84e,Fish82c}.
	Hence, every vNM function $u$ is equivalent to an SSB function $\phi^{u}$, where $\phi^{u}(p,q) = u(p) - u(q)$, in the sense that both functions induce the same preferences over lotteries.
	In general, let $\Phi\subseteq \mathbb{R}^{A\times A}$ be a set of possible utility functions called the \emph{domain}.

	For every $N\in\mathcal{F}(\mathbb{N})$, let $\phi_N = (\phi_i)_{i\in N}\in \Phi^N$ be a vector of SSB functions. If $N = \{i\}$, we write $\phi_i$ with abuse of notation. 
	A lottery \emph{$p$ is welfare-maximizing for~$\phi_N$} if
	\[
	\sum_{i\in N} \phi_i(p,q)\ge 0 \text{ for all } q\in\Delta(A)\text.\tag{welfare maximization}
	\]
	If agents are endowed with vNM functions, welfare maximization is equivalent to maximizing the sum of expected utilities. In this case, there always exists a \emph{degenerate} welfare-maximizing lottery. When moving to general SSB functions, this does not hold anymore, but the existence of a (not necessarily degenerate) welfare-maximizing lottery is guaranteed by the minimax theorem \citep{Fish84a}.

	Our central objects of study are \emph{mechanisms} that map a vector of SSB functions to a lottery. 
	A mechanism is \emph{welfare-maximizing} if it always returns welfare-maximizing lotteries for vectors of SSB functions from $\Phi$, \ie for all $N\in\mathcal{F}(\mathbb{N})$ and $\phi_N\in\Phi^N$, $f(\phi_N)$ is welfare-maximizing.
	
	We will relate welfare maximization to participation. A mechanism \emph{satisfies participation} if participating in the mechanism never decreases the welfare of the participating group of agents. Formally, for every $N\in\mathcal{F}(\mathbb{N})$, $S\subsetneq N$, and $\phi_N\in\Phi^N$,
	\[
		\sum_{i\in S} \phi_i(f(\phi_N),f(\phi_{N\setminus S}))\ge 0\text.\tag{participation}
	\]

As we will see in \secref{sec:ordinal}, this strong notion of participation has important consequences even in settings in which the interpersonal comparison of utility is problematic (such as in voting).	

	\section{Welfare Maximization and Participation}
	\label{sec:main}

	We are now ready to prove three theorems that highlight the relationship between welfare maximization and participation. The first result shows that welfare maximization implies participation. While this is straightforward for vNM utility functions, the generalization to SSB functions will be vital for the results in \secref{sec:ordinal}.

	\begin{theorem}\label{thm:main}
		Every welfare-maximizing mechanism satisfies participation.
	\end{theorem}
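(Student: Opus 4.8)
The plan is to exploit the defining inequality of welfare maximization twice---once for the grand coalition $N$ and once for the reduced coalition $N\setminus S$---and then combine the two inequalities using skew-symmetry. Fix $N\in\mathcal{F}(\mathbb{N})$, a proper subset $S\subsetneq N$, and a profile $\phi_N\in\Phi^N$, and abbreviate the two relevant outcomes by $p=f(\phi_N)$ and $q=f(\phi_{N\setminus S})$. Since $f$ is welfare-maximizing, $p$ is welfare-maximizing for $\phi_N$ and $q$ is welfare-maximizing for $\phi_{N\setminus S}$; the goal is to bound $\sum_{i\in S}\phi_i(p,q)$ from below by $0$.

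First I would instantiate welfare maximization of $p$ at the lottery $q$, yielding $\sum_{i\in N}\phi_i(p,q)\ge 0$, and split the sum along $N=S\cup(N\setminus S)$ to obtain $\sum_{i\in S}\phi_i(p,q)\ge -\sum_{i\in N\setminus S}\phi_i(p,q)$. Next I would instantiate welfare maximization of $q$ at the lottery $p$, giving $\sum_{i\in N\setminus S}\phi_i(q,p)\ge 0$. Applying skew-symmetry term by term turns this into $\sum_{i\in N\setminus S}\phi_i(p,q)\le 0$, so its negation is nonnegative. Chaining the two facts yields $\sum_{i\in S}\phi_i(p,q)\ge -\sum_{i\in N\setminus S}\phi_i(p,q)\ge 0$, which is exactly the participation inequality.

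There is no genuine analytic obstacle here: existence of the welfare-maximizing lotteries $p$ and $q$ is already guaranteed by the minimax theorem, and the argument never leaves elementary algebra. The one conceptual step worth flagging is the decision to evaluate each coalition's welfare-maximizing condition at the \emph{other} coalition's outcome; skew-symmetry is precisely what lets the cross term $\sum_{i\in N\setminus S}\phi_i(p,q)$ enter the two inequalities with opposite signs, so that it cancels in the right direction. I would also note that the proof uses nothing about $\Phi$ beyond skew-symmetry and the welfare-maximization property, so it applies verbatim on any domain---which is what makes the subsequent specialization to vNM or voting settings immediate.
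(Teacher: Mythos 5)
Your proposal is correct and is essentially the paper's own argument: evaluate the welfare-maximization inequality for $f(\phi_N)$ at $f(\phi_{N\setminus S})$ and vice versa, then combine via skew-symmetry so the $N\setminus S$ cross term cancels with the right sign. The only difference is notational (you work with the sums directly rather than the aggregated functions $\phi^N$, $\phi^S$, $\phi^{N\setminus S}$).
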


	\begin{proof}		
		Let~$N\in\mathcal{F}(\mathbb{N})$,~$S\subsetneq N$,~$\phi_N\in\Phi^N$, and $f$ a welfare-maximizing mechanism.
		Furthermore, let
		\[
			\phi^N = \sum_{i\in N} \phi_i \quad \text{and} \quad \phi^S = \sum_{i\in S} \phi_i \quad \text{and} \quad \phi^{N\setminus S} = \sum_{i\in N\setminus S}\phi_i\text.
		\]
		For $p = f(\phi_N)$ and $p' = f(\phi_{N\setminus S})$, we then have that
		\begin{align*}
			\phi^N(p,q)&\ge 0 \text{ for all } q\in\Delta(A) \text{, and}\\
			\phi^{N\setminus S}(p',q)&\ge 0 \text{ for all } q\in\Delta(A)\text,
		\end{align*}
		since, by assumption, $f$ is welfare-maximizing for $\phi_N$ and $\phi_{N\setminus S}$.
		Thus, it follows that
		\begin{align*}
			\phi^S(p,p') = \phi^N(p,p') - \phi^{N\setminus S}(p,p') = \underbrace{\phi^N(p,p')}_{\ge 0} + \underbrace{\phi^{N\setminus S}(p',p)}_{\ge 0} \ge 0\text.
		\end{align*}
		The second equality follows from skew-symmetry of $\phi^{N\setminus S}$. The inequality follows from the fact that $f$ is welfare-maximizing for $\phi_N$ and $\phi_{N\setminus S}$.
		Hence, $f$ satisfies participation.
		\end{proof}

Clearly, \thmref{thm:main} also holds for Cartesian domains that are not symmetric among agents.
The converse of \thmref{thm:main} does not hold in full generality as every constant function satisfies participation but fails to be welfare-maximizing.
However, for sufficiently rich domains, the converse holds for mechanisms that satisfy additional properties. 

We first define homogeneity and weak welfare maximization.
	For all $N\in\mathcal{F}(\mathbb{N})$ and $k\in\mathbb{N}$, let
	$kN = \{i + l \max(N)\colon i\in N\text{ and } l\in[k]\}$
	and $\phi_i = \phi_j$ if $i\equiv j\mod \max(N)$.
	A mechanism $f$ is \emph{homogeneous} if replicating the set of agents does not affect the outcome, \ie $f(\phi_N) = f(\phi_{kN})$ for all $k\in\mathbb{N}$ and $\phi_N\in\Phi^N$.
	A mechanism $f$ is \emph{weakly welfare-maximizing} if $f(\phi_N)=p$ whenever $p$ is degenerate and the unique welfare-maximizing lottery for $\phi_N$. 
	The following lemma shows that a degenerate lottery is the unique welfare-maximizing lottery if and only if it is strictly preferred to every other degenerate lottery.

	\begin{lemma}\label{lem:condorcet}
		Let $\phi\in\Phi$ and $x\in A$. $x$ is the unique welfare-maximizing lottery for $\phi$ iff $\phi(x,y) > 0$ for all $y\in A\setminus\{x\}$. 
	\end{lemma}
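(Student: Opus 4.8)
The plan is to prove the two directions separately, treating the forward (``only if'') direction as the substantive one. Throughout I use that skew-symmetry gives $\phi(x,x)=0$, and that by bilinearity a lottery $p$ is welfare-maximizing for $\phi$ exactly when $\phi(p,y)\ge 0$ for every degenerate lottery $y\in A$, since $\phi(p,q)=\sum_{y\in A} q(y)\,\phi(p,y)$.

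For the easy direction, suppose $\phi(x,y)>0$ for all $y\in A\setminus\{x\}$. Then $\phi(x,y)\ge 0$ for all $y$, so $x$ is welfare-maximizing. For uniqueness, let $p\ne x$. Since $p$ places positive probability on some alternative distinct from $x$ and $\phi(x,x)=0$, we get $\phi(x,p)=\sum_{y\ne x}p(y)\,\phi(x,y)>0$, hence $\phi(p,x)=-\phi(x,p)<0$ by skew-symmetry, so $p$ is not welfare-maximizing. Thus $x$ is the unique welfare-maximizing lottery.

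For the converse I would argue contrapositively: assuming $x$ \emph{is} the unique welfare-maximizing lottery (so in particular welfare-maximizing, which already forces $\phi(x,y)\ge 0$ for all $y$), I show that a tie $\phi(x,y_0)=0$ with $y_0\ne x$ leads to a second welfare-maximizing lottery and hence a contradiction. Collect the tied alternatives $Z=\{z\in A:\phi(x,z)=0\}$, which contains both $x$ and $y_0$. The idea is to perturb $x$ towards an optimal lottery of the subgame played on $Z\setminus\{x\}$.

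The main obstacle---and the only real work---is producing this second welfare-maximizing lottery, because a lottery supported on $Z\setminus\{x\}$ need not do well against alternatives outside $Z$. I would resolve this in two steps. First, apply the minimax theorem to $\phi$ restricted to $\Delta(Z\setminus\{x\})$ (still an SSB function) to obtain a lottery $r$ with $\supp{r}\subseteq Z\setminus\{x\}$ and $\phi(r,z)\ge 0$ for all $z\in Z\setminus\{x\}$; since $r$ is supported on $Z$ and $\phi(x,z)=0$ there, skew-symmetry gives $\phi(r,x)=-\sum_{z\in Z\setminus\{x\}} r(z)\,\phi(x,z)=0$, so in fact $\phi(r,z)\ge 0$ for all of $Z$. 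Second, set $p_\epsilon=(1-\epsilon)x+\epsilon r$. On $Z$ we have $\phi(p_\epsilon,z)=\epsilon\,\phi(r,z)\ge 0$, while on the finitely many $z\notin Z$ we have $\phi(x,z)>0$, so $\phi(p_\epsilon,z)>0$ for all sufficiently small $\epsilon>0$ by continuity. Hence $p_\epsilon$ is welfare-maximizing and $p_\epsilon\ne x$ (as $p_\epsilon(x)=1-\epsilon<1$), contradicting uniqueness. I expect the delicate point to be exactly this split between the tied set $Z$, where the perturbation must not hurt, and its complement, where the strict advantage of $x$ provides the slack needed to absorb the perturbation.
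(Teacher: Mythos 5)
Your proposal is correct and follows essentially the same route as the paper: both directions are handled identically, and in the substantive direction your set $Z\setminus\{x\}$ of tied alternatives coincides with the paper's set $B=\{y\in A\setminus\{x\}\colon \phi(x,y)\le 0\}$ (since welfare maximization of $x$ already forces $\phi(x,y)\ge 0$), and both proofs obtain a second welfare-maximizing lottery by mixing $x$ with a minimax-optimal lottery of the game restricted to that set. The only difference is that you spell out the easy direction, which the paper dismisses as following from linearity.
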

	
	\begin{proof}
		For the direction from left to right, 
		assume that $x\in A$ is welfare-maximizing for $\phi$ and $B = \{y\in A\setminus\{x\}\colon \phi(x,y) \le 0\}\neq \emptyset$. By $p'$ we denote some lottery on $B$ that is welfare-maximizing for $(\phi_{ij})_{i,j\in B}$. Let $p$ be the lottery that is equal to $p'$ on $B$ and $0$ otherwise, \ie $p(y) = p'(y)$ for all $y\in B$ and $p(y) = 0$ for all $y\in A\setminus B$. By the choice of $B$, we have that $\phi(p, y)\ge 0$ for all $y\in B\cup\{x\}$. Moreover, for $\epsilon > 0$ small enough, we have $\phi(\epsilon p + (1-\epsilon) x, y)\ge 0$ for all $y\in A\setminus B$, since $\phi(x,y) > 0$. Hence, $\epsilon p + (1-\epsilon) x$ is also welfare-maximizing for $\phi$ and $x$ cannot be the unique welfare-maximizing lottery.
		
		The direction from right to left follows from linearity of $\phi$.
	\end{proof}

Next, we define two conditions on domains.
A domain $\Phi$ is \emph{symmetric} if for all $\phi\in\Phi$, $-\phi\in\Phi$. A domain $\Phi$ is \emph{non-imposing} if for all $x\in A$, there is $\phi\in\Phi$ such that $\phi(x,y)>0$ for all $y\in A$.

\begin{theorem}\label{thm:condorcet}
Let $\Phi$ be a symmetric and non-imposing domain.
Every homogeneous, weakly welfare-maximizing mechanism on $\Phi$ that satisfies participation is welfare-maximizing.
\end{theorem}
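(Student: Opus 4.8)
The plan is to argue directly that $p := f(\phi_N)$ is welfare-maximizing for every $N\in\mathcal{F}(\mathbb{N})$ and $\phi_N\in\Phi^N$. Writing $\phi^N = \sum_{i\in N}\phi_i$ as in the proof of \thmref{thm:main}, bilinearity makes $\phi^N(p,\cdot)$ linear, so $p$ is welfare-maximizing iff $\phi^N(p,a)\ge 0$ for every degenerate lottery $a\in A$. I will therefore fix an alternative $a$ with $\phi^N(p,a) < 0$ and derive a contradiction by embedding $\phi_N$ into a carefully chosen larger society on which the three properties of $f$ can be played off against one another.

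For the construction, non-imposingness supplies a $\psi\in\Phi$ with $\psi(a,y) > 0$ for all $y\ne a$, so by Lemma~\ref{lem:condorcet} the alternative $a$ is the unique welfare-maximizing lottery whenever the aggregate SSB function equals $\psi$. I would then (i) use homogeneity to replace $N$ by its $k$-fold replica $kN$, which leaves the outcome $p$ unchanged and scales the aggregate to $k\phi^N$; (ii) use symmetry of the domain to adjoin a mirror coalition $\overline{kN}$ of fresh agents with utilities $-\phi_i$, contributing aggregate $-k\phi^N$; and (iii) adjoin a single agent with utility $\psi$. Call the resulting society $M$. Its aggregate telescopes to $k\phi^N - k\phi^N + \psi = \psi$, so $a$ is the unique welfare-maximizing lottery for $M$, and weak welfare maximization forces $f(\phi_M) = a$.

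Now I apply participation to $M$ with removed coalition $S = \overline{kN}\cup\{\text{the }\psi\text{-agent}\}$, chosen so that $M\setminus S = kN$ and hence $f(\phi_{M\setminus S}) = p$ by homogeneity. The aggregate of $S$ is $-k\phi^N + \psi$, so participation gives $(-k\phi^N + \psi)(a,p)\ge 0$, which by skew-symmetry rearranges to $k\,\phi^N(p,a) + \psi(a,p)\ge 0$. Here $\psi(a,p)$ is a constant that does not depend on $k$, whereas $\phi^N(p,a) < 0$ is fixed and negative, so the left-hand side tends to $-\infty$ as $k$ grows; taking $k$ large enough contradicts the inequality. This rules out any such $a$ and shows that $p$ is welfare-maximizing.

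The hard part is engineering the cancellation in the right order, and this is where the naive attempt breaks down. If one simply adds agents who love $a$ until $a$ becomes the unique welfare-maximizer, then every coalition one can peel off to recover the original outcome $p$ still carries a stray positive term of the form $\psi(a,p)$, which the single participation inequality cannot absorb. The decisive trick is to first inflate $N$ to $kN$ via homogeneity and cancel its aggregate with a mirror coalition via symmetry, so that only one $a$-loving agent is needed to break the tie: the stray term then stays bounded while the genuine term $k\,\phi^N(p,a)$ grows linearly in $k$, which is exactly what forces the contradiction.
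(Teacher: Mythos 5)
Your proof is correct and follows essentially the same route as the paper's: reduce to an alternative $a$ with $\phi^N(p,a)<0$ via linearity, inflate $N$ to $kN$ by homogeneity, cancel its aggregate with a mirror coalition from the symmetric domain, add one agent (from non-imposition of the domain) for whom $a$ is the unique welfare-maximizing degenerate lottery so that weak welfare maximization pins the outcome to $a$, and let the participation inequality for the joining coalition blow up linearly in $k$. The only cosmetic difference is that the paper fixes $k$ in advance via $kc+d<0$ with $d=\max_y\phi_j(x,y)$ rather than taking $k\to\infty$.
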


\begin{proof}
	Let $f$ be a homogeneous, weakly welfare-maximizing mechanism that satisfies participation and 
	assume for contradiction that $f$ is not welfare-maximizing for some $N\in\mathcal{F}(\mathbb{N})$ and $\phi_N\in\Phi^N$, \ie there is a lottery $q$ such that $\sum_{i\in N} \phi_i(p,q) < 0$, where $p = f(\phi_N)$.
	By linearity of the $\phi_i$, there is an alternative $x$ such that $\sum_{i\in N} \phi_i(p,x) = c < 0$. 
	Now, let $\bar N$ be a set of agents disjoint from $N$ such that $\phi_{\bar N} = -\phi_N$ and $j$ an agent that is not contained in $N$ or $\bar N$ such that $x$ is a degenerate unique welfare-maximizing lottery for $\phi_j$, \ie $\phi_j(x,y)>0$ for all $y\in A\setminus\{x\}$. 
	Such an agent exists since $\Phi$ is non-imposing.
	Moreover, let $d = \max\{\phi_j(x,y)\colon y\in A\}$ and $k$ be an integer such that $kc + d < 0$.
	It follows from homogeneity that $f(\phi_{kN}) = f(\phi_N)$. 
	By definition of $\phi_{\bar N}$ and $\phi_j$, it follows that $x$ is the unique welfare-maximizing lottery for $\phi_{kN\cup k\bar N\cup\{j\}}$. 
	Hence, $x = f(\phi_{kN\cup k\bar N\cup\{j\}})$ follows from weak welfare maximization of $f$. Furthermore, we have
	\[
		\sum_{i\in \bar N} k\phi_i(p,x) + \phi_j(p,x) \ge -(kc + d) > 0\text,
	\]
	which contradicts participation.
\end{proof}

For the second characterization, we define non-imposition of mechanisms and a property called cancellation, which requires that adding two agents with completely opposed preferences does not affect the outcome.
A mechanism $f$ is \emph{non-imposing} if for all $N\in\mathcal{F}(\mathbb{N})$ and all degenerate lotteries $p\in \Delta(A)$, there is $\phi_N\in \Phi^N$ such that $f(\phi_N)=p$. A mechanism $f$ satisfies \emph{cancellation} if for all $N\in\mathcal{F}(\mathbb{N})$, $\phi_N\in\Phi^N$, and $i,\bar i\not\in N$ such that $\phi_{i} = -\phi_{\bar i}\in\Phi$, $f(\phi_{N\cup\{i,\bar i\}})=f(\phi_N)$.
	
\begin{theorem}	\label{thm:cancellation}
Let $\Phi$ be a symmetric domain.
Every homogeneous, non-imposing mechanism that satisfies cancellation and participation is welfare-maximizing.
\end{theorem}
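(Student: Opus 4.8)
The plan is to run the same contradiction scheme as in \thmref{thm:condorcet}, but to replace the appeal to weak welfare maximization (which there forced the outcome of the auxiliary profile to be $x$) by a combination of non-imposition of the mechanism and cancellation. So I would begin by assuming that $f$ is not welfare-maximizing: there are $N\in\mathcal{F}(\mathbb{N})$, $\phi_N\in\Phi^N$, and a lottery $q$ with $\sum_{i\in N}\phi_i(p,q)<0$ for $p=f(\phi_N)$. Exactly as before, bilinearity of the $\phi_i$ lets me replace $q$ by a single alternative $x$, so that $c:=\sum_{i\in N}\phi_i(p,x)<0$.

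Next I would build an auxiliary profile whose outcome I can actually compute. Since $\Phi$ is symmetric, I introduce a disjoint copy $\bar N$ of $N$ with $\phi_{\bar N}=-\phi_N$, and since $f$ is non-imposing I fix a further disjoint group $H$ together with a profile $\psi_H$ satisfying $f(\psi_H)=x$ (recall $x$ is degenerate). The crucial point is that cancellation now pins down the outcome of the combined profile: peeling off the opposed pairs between $kN$ and $k\bar N$ one at a time, where at each step the remaining group still contains the nonempty set $H$ so that cancellation applies, yields $f(\phi_{kN\cup k\bar N\cup H})=f(\psi_H)=x$ for every replication factor $k$. Homogeneity, meanwhile, gives $f(\phi_{kN})=f(\phi_N)=p$.

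I would then apply participation to the group $G=kN\cup k\bar N\cup H$ with leaving coalition $S=k\bar N\cup H$, so that $G\setminus S=kN$. This gives $\sum_{i\in S}\phi_i(x,p)\ge 0$. The left-hand side splits as $\sum_{i\in k\bar N}\phi_i(x,p)+\sum_{i\in H}\psi_i(x,p)$; the first sum equals $kc$, since each copy of $\bar N$ contributes $\sum_{i\in\bar N}\phi_i(x,p)=\sum_{i\in N}\phi_i(p,x)=c$ by skew-symmetry and the choice of $\bar N$, while the second is a constant $e$ independent of $k$. Choosing $k$ large enough that $kc+e<0$ then contradicts participation, completing the argument.

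The main thing to get right is the bookkeeping that keeps cancellation applicable: I must retain $H$ in the profile throughout the peeling so the base group is never empty, and for the eventually-chosen $k$ I must pick disjoint index sets so that $kN$, $k\bar N$, and $H$ do not overlap. The only genuinely substantive design choice is to replicate \emph{only} $N$ and $\bar N$ while holding $H$ fixed; this is what makes the $k$-dependent term $kc$ (negative) dominate the fixed contribution $e$ of $H$, and it is precisely the step where homogeneity is indispensable.
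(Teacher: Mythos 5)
Your proposal is correct and follows essentially the same route as the paper's proof: the same contradiction setup, the same auxiliary profile $kN\cup k\bar N\cup H$ whose outcome is pinned to $x$ by peeling off opposed pairs via cancellation and landing on the non-imposition profile, and the same participation violation for the leaving coalition $k\bar N\cup H$ with the $k$-dependent term $kc$ dominating. The only cosmetic differences are that the paper takes $H$ to be a singleton $\{j\}$ and bounds its contribution by $d=\max\{\phi_j(x,y)\colon y\in A\}$ rather than by the exact constant $e$.
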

\begin{proof}
		Let $f$ be a homogeneous, non-imposing mechanism that satisfies cancellation and participation and 
	assume for contradiction that $f$ is not welfare-maximizing for some $N\in\mathcal{F}(\mathbb{N})$ and $\phi_N\in\Phi^N$, \ie there is a lottery $q$ such that $\sum_{i\in N} \phi_i(p,q) < 0$, where $p = f(\phi_N)$.
	By linearity of the $\phi_i$, there is an alternative $x$ such that $\sum_{i\in N} \phi_i(p,x) = c < 0$. 
	Now, let $\bar N$ be a set of agents disjoint from $N$ such that $\phi_{\bar N} = -\phi_N$. 
	Moreover, let $\{j\}$ be a set of agents disjoint from $N$ and $\bar N$ such that $f(\phi_j) = x$ (which exists by non-imposition) and let $d = \max\{\phi_j(x,y)\colon y\in A\}$. 
	Then, let $k$ be an integer such that $kc + d < 0$.
	It follows from homogeneity that $f(\phi_{kN}) = f(\phi_N)$. 
	Since $f$ satisfies cancellation, we have that $f(\phi_{kN\cup k\bar N\cup \{j\}}) = f(\phi_{N'}) = x$. 
	Furthermore, we have
	\[
		\sum_{i\in \bar N} k\phi_i(p,x) + \phi_j(p,x) \ge -(kc + d) > 0\text,
	\]
	which contradicts participation.
\end{proof}

	\section{Ordinal Mechanisms}
	\label{sec:ordinal}

We now turn to the important special case in which only ordinal preferences between alternatives are known and consider \emph{ordinal mechanisms}, \ie functions that map an ordinal preference profile to a lottery. 
Ordinal preferences are given in the form of complete, reflexive, and transitive binary relations, which can be conveniently represented by SSB functions whose entries are restricted to $\{-1,0,+1\}$, where $\phi_i(x,y)=+1$ if agent $i$ prefers $x$ to $y$, $\phi_i(x,y)=-1$ if he prefers $y$ to $x$, and $\phi_i(x,y)=0$ if he is indifferent. We refer to this representation as the \emph{canonical} utility representation of ordinal preferences and define ordinal mechanisms on the domain
\[\Phi_{\pc} = \{-1,0,+1\}^{A\times A}\text.\]
Every such representation entails a complete preference relation over lotteries of alternatives (called the \emph{pairwise comparison (PC)} preference extension). The natural interpretation of this relation is that lottery $p$ is preferred to lottery $q$ if the probability that $p$ yields an alternative preferred to the alternative returned by $q$ is at least as large as the other way round. For more details, please see \citet{Blav06a} and \citet{ABB13d,ABB14b}.

A lottery \emph{stochastically dominates} another if the former yields more expected utility than the latter for every vNM function that is consistent with the ordinal preferences. An ordinal mechanism satisfies \emph{ordinal participation} if no group of agents can abstain from $f$ such that \emph{each} of the agents is individually better off with respect to stochastic dominance \citep[see][]{BBH15b}.

\begin{proposition}\label{pro:ordinalpart}
Every ordinal mechanism that satisfies participation satisfies ordinal participation.
\end{proposition}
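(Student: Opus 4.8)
The plan is to argue by contraposition, using the fact that for the canonical representation of ordinal preferences the stochastic dominance relation is a strengthening of the pairwise-comparison relation encoded by $\phi_i$. Suppose an ordinal mechanism $f$ on $\Phi_{\pc}$ satisfies participation but violates ordinal participation. Then there are $N\in\mathcal{F}(\mathbb{N})$, a profile $\phi_N\in\Phi_{\pc}^N$, and a group $S\subsetneq N$ that can profitably abstain: writing $p = f(\phi_N)$ and $p' = f(\phi_{N\setminus S})$, every agent $i\in S$ is strictly better off under $p'$ than under $p$ with respect to stochastic dominance. The goal is to convert these $|S|$ individual stochastic-dominance improvements into a single violation of the group participation inequality $\sum_{i\in S}\phi_i(p,p')\ge 0$.

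The core of the reduction is the claim that, for each agent $i$ with canonical utility $\phi_i$, if $p'$ strictly stochastically dominates $p$ then $\phi_i(p',p)>0$. Granting this, each blocking agent $i\in S$ contributes $\phi_i(p',p)>0$, so by skew-symmetry $\phi_i(p,p')<0$; summing over $S$ yields $\sum_{i\in S}\phi_i(p,p')<0$. Since $p=f(\phi_N)$ and $p'=f(\phi_{N\setminus S})$, this directly contradicts participation, which finishes the argument. Thus everything reduces to the per-agent comparison between the two preference extensions.

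The hard part will be establishing this per-agent claim, i.e. that strict stochastic dominance forces a strictly positive pairwise-comparison value. I would prove it by ordering the indifference classes of agent $i$ from best to worst as $E_1,\dots,E_m$, writing $p_k,p'_k$ for the mass each lottery places on $E_k$ and $P_k=\sum_{j\le k}p_j$, $P'_k=\sum_{j\le k}p'_j$ for the cumulative masses on the top $k$ classes. Stochastic dominance says $D_k := P'_k - P_k\ge 0$ for every $k$, with $D_m=0$ and $D_k>0$ for some $k$. A summation-by-parts computation then yields the identity $\phi_i(p',p)=\sum_{k=1}^{m-1} D_k\,(p_k+p_{k+1})$, which is manifestly non-negative. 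For strictness, let $k^{*}$ be the largest index with $D_{k^{*}}>0$; maximality (together with $D_m=0$) gives $D_{k^{*}+1}=0$, hence $p_{k^{*}+1}=P_{k^{*}+1}-P_{k^{*}}=p'_{k^{*}+1}+D_{k^{*}}\ge D_{k^{*}}>0$. The $k^{*}$-th summand is therefore strictly positive while all other summands are non-negative, so $\phi_i(p',p)>0$. This strict sign is exactly what the summation over $S$ needs, and it is the only place where genuine work beyond bookkeeping is required.
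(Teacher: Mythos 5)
Your proposal is correct, and its logical skeleton is the same as the paper's: participation gives $\sum_{i\in S}\phi_i(p,p')\ge 0$, and the bridge to stochastic dominance is the fact that for the canonical representation, strict stochastic dominance forces a strictly positive pairwise-comparison value. (The paper runs the argument in the other direction---from the nonnegative sum it extracts a single $i\in S$ with $\phi_i(p,p')\ge 0$ and concludes that this agent is not strictly better off by abstaining---but that is just the contrapositive of your summation over all of $S$; since ordinal participation only requires that \emph{some} member of $S$ fail to improve, one witness suffices either way.) The genuine difference is that the paper imports the key lemma as a black box, citing \citet{ABB14b} for the statement that the pairwise-comparison extension refines stochastic dominance, whereas you prove it from scratch via the summation-by-parts identity $\phi_i(p',p)=\sum_{k=1}^{m-1}D_k\,(p_k+p_{k+1})$ over the indifference classes. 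I checked this identity and your strictness argument (that $D_{k^*}>0$ and $D_{k^*+1}=0$ force $p_{k^*+1}\ge D_{k^*}>0$), and both are correct. Your route buys a self-contained proof that moreover nails down the \emph{strict} part of the refinement ($p'$ strictly dominating $p$ yields $\phi_i(p',p)>0$, not merely $\ge 0$), which is exactly what the argument needs and which the paper's citation of the weak refinement statement passes over somewhat quickly; the paper's route is of course shorter.
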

		\begin{proof}			
			Let~$N\in\mathcal{F}(\mathbb{N})$,~$S\subsetneq N$,~$\phi_N\in\Phi_{\pc}^N$, and $f$ an ordinal mechanism that satisfies participation.
			Participation of $f$ implies
			$
			\sum_{i\in N} \phi_i (f(\phi_N), f(\phi_{N\setminus S})) \ge 0\text.
			$
			In particular, there is $i\in S$ such that $\phi_i(f(\phi_{N}), f(\phi_{N\setminus S})) \ge 0$. \citet{ABB14b} have shown that the preference relation induced by the canonical utility representation is a refinement of stochastic dominance, \ie if $p$ stochastically dominates $q$, then $p$ is also preferred to $q$ with respect to the canonical utility representation. Hence, $i$ weakly prefers participating to abstaining with respect to stochastic dominance.
		\end{proof}

Ordinal participation is not easily satisfied. For example, \citet{BBH15b} have shown that no majoritarian ordinal mechanism can satisfy ordinal participation and \emph{ex post} efficiency.\footnote{An ordinal mechanism is majoritarian if its output only depends on the pairwise majority relation. \emph{Ex post} efficient mechanism always assign probability $0$ to Pareto dominated alternatives.}
By leveraging the results obtained in \secref{sec:main}, we can derive a number of statements concerning ordinal mechanisms that return so-called maximal lotteries. A lottery is \emph{maximal} for a given ordinal preference profile if it is welfare-maximizing for the canonical utility representation. In the context of voting, maximal lotteries are almost always unique and ordinal mechanisms that return maximal lotteries form an attractive class of randomized voting rules \citep{Fish84a,ABBH12a,Bran13a}. Such mechanisms have also been considered in the context of randomized assignment \citep{KMN11a,ABS13a}.

First, \thmref{thm:main} and \propref{pro:ordinalpart} imply that every ordinal mechanism that returns maximal lotteries satisfies ordinal participation. 

\begin{corollary}
Every ordinal mechanism that returns maximal lotteries satisfies ordinal participation.
\end{corollary}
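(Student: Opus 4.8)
The plan is to compose the two results that the paper has already established, so the argument is essentially a one-line chain of implications. The corollary concerns ordinal mechanisms that return maximal lotteries, and by definition a lottery is maximal for an ordinal preference profile exactly when it is welfare-maximizing for the canonical utility representation on the domain $\Phi_{\pc}$. Thus an ordinal mechanism that always returns maximal lotteries is precisely a welfare-maximizing mechanism on $\Phi_{\pc}$.

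First I would invoke \thmref{thm:main}: since the mechanism is welfare-maximizing (on the domain $\Phi_{\pc}$), it satisfies participation. I should note that \thmref{thm:main} is stated for an arbitrary domain $\Phi$, so it applies verbatim with $\Phi = \Phi_{\pc}$; no symmetry or non-imposition hypothesis is needed for that direction. Second, I would apply \propref{pro:ordinalpart}, which says that any ordinal mechanism satisfying participation also satisfies ordinal participation. Chaining these two facts yields that the mechanism satisfies ordinal participation, which is exactly the claim.

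There is no real obstacle here; the only point that requires a moment's care is the identification between ``returns maximal lotteries'' and ``welfare-maximizing on $\Phi_{\pc}$,'' which is immediate from the definition of maximality given in the surrounding text. I would make that identification explicit in a single sentence so that the hypotheses of \thmref{thm:main} and \propref{pro:ordinalpart} are both visibly met. Concretely:

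\begin{proof}
By definition, an ordinal mechanism returns maximal lotteries if and only if it is welfare-maximizing for the canonical utility representation, \ie it is a welfare-maximizing mechanism on the domain $\Phi_{\pc}$. By \thmref{thm:main}, every welfare-maximizing mechanism satisfies participation. Since the mechanism is moreover ordinal, \propref{pro:ordinalpart} implies that it satisfies ordinal participation.
\end{proof}
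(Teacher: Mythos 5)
Your proof is correct and matches the paper's own argument exactly: the paper derives this corollary by observing that a mechanism returning maximal lotteries is welfare-maximizing on $\Phi_{\pc}$ by definition, then chaining \thmref{thm:main} with \propref{pro:ordinalpart}. Nothing is missing.
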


Theorems \ref{thm:condorcet} and \ref{thm:cancellation} entail axiomatic characterizations of mechanisms that return maximal lotteries. 
It is easily seen that $\Phi_{\pc}$ satisfies symmetry and non-imposition. 
Alternative $x$ is called a \emph{Condorcet winner} of a given preference profile $\phi_N\in \Phi_{\pc}^N$ if a majority of agents prefers it to any other alternative, \ie $\sum_{i\in N} \phi_i(x,y)>0$ for all $y\in A$. An ordinal mechanism is \emph{Condorcet-consistent} if it always puts probability $1$ on a Condorcet winner. It follows from \lemref{lem:condorcet} that Condorcet-consistency is equivalent to weak welfare maximization for the domain $\Phi_{\pc}$. We thus obtain the following characterization as a corollary of \thmref{thm:condorcet}.
	
		\begin{corollary}\label{cor:condorcet}
			Every homogeneous, Condorcet-consistent, ordinal mechanism that satisfies participation returns maximal lotteries.
		\end{corollary}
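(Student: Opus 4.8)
The plan is to apply \thmref{thm:condorcet} to the domain $\Phi_{\pc}$, so that almost all of the work reduces to checking that its hypotheses translate faithfully into the ordinal setting. Concretely, I would verify three things: (i) that $\Phi_{\pc}$ is symmetric and non-imposing; (ii) that, for profiles over $\Phi_{\pc}$, Condorcet-consistency coincides with weak welfare maximization; and (iii) that welfare maximization on $\Phi_{\pc}$ is, by definition, the property of returning maximal lotteries. Once these are in place, \thmref{thm:condorcet} yields the conclusion immediately, since $f$ is assumed homogeneous and to satisfy participation.

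For (i), symmetry is clear because negating every entry of a matrix in $\{-1,0,+1\}^{A\times A}$ leaves it in $\{-1,0,+1\}^{A\times A}$; non-imposition follows by taking, for each $x\in A$, the canonical representation of any strict preference order that ranks $x$ first, which satisfies $\phi(x,y)>0$ for all $y\in A\setminus\{x\}$.

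Step (ii) is the heart of the argument. Since welfare maximization for $\phi_N$ depends only on the aggregate SSB function $\phi^N=\sum_{i\in N}\phi_i$, I would apply \lemref{lem:condorcet} to $\phi^N$, noting that the lemma's reasoning uses only that $\phi^N$ is an SSB function, even though $\phi^N\notin\Phi_{\pc}$ in general. This gives that the degenerate lottery on $x$ is the unique welfare-maximizing lottery for $\phi_N$ if and only if $\sum_{i\in N}\phi_i(x,y)>0$ for all $y\in A\setminus\{x\}$, that is, if and only if $x$ is a Condorcet winner. Hence the two axioms constrain $f$ on exactly the same profiles and demand the same outcome there, so $f$ is weakly welfare-maximizing precisely when it is Condorcet-consistent.

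The main obstacle, such as it is, lies in making (ii) watertight rather than in performing any computation. One must check both directions of the equivalence and, in particular, that the profiles left unconstrained coincide: when no Condorcet winner exists there is likewise no degenerate unique welfare-maximizing lottery, so neither axiom imposes anything. It is also worth recording that a Condorcet winner, when it exists, is unique---two of them would contradict skew-symmetry of $\phi^N$---matching the uniqueness built into weak welfare maximization. With the equivalence in hand, $f$ meets every hypothesis of \thmref{thm:condorcet} on the symmetric, non-imposing domain $\Phi_{\pc}$, so $f$ is welfare-maximizing and therefore returns maximal lotteries.
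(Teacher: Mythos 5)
Your proposal is correct and follows essentially the same route as the paper: verify that $\Phi_{\pc}$ is symmetric and non-imposing, use \lemref{lem:condorcet} (applied to the aggregate SSB function $\phi^N=\sum_{i\in N}\phi_i$) to identify Condorcet-consistency with weak welfare maximization, and then invoke \thmref{thm:condorcet}. Your extra care in noting that the lemma applies to $\phi^N$ even though $\phi^N\notin\Phi_{\pc}$, and that Condorcet winners are unique by skew-symmetry, only makes explicit what the paper leaves implicit.
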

		
\coref{cor:condorcet} can be contrasted with a classic result by \citet{Moul88b}. \citeauthor{Moul88b} has shown that no Condorcet-consistent ordinal mechanism that always returns degenerate lotteries satisfies ordinal participation.

Similarly, \thmref{thm:cancellation} yields an alternative characterization of maximal lotteries. Observe that non-imposition is weaker than Condorcet-consistency whereas cancellation is independent from Condorcet-consistency.
	
\begin{corollary}\label{cor:cancellation}
Every homogeneous, non-imposing, ordinal mechanism that satisfies cancellation and participation returns maximal lotteries.
\end{corollary}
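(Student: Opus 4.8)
The plan is to recognize this corollary as the specialization of \thmref{thm:cancellation} to the ordinal domain $\Phi_{\pc}$, exactly as \coref{cor:condorcet} specializes \thmref{thm:condorcet}. The work therefore reduces to matching the hypotheses and identifying the two notions of optimality, with no genuinely new argument required. First I would check that \thmref{thm:cancellation} is applicable, \ie that $\Phi_{\pc} = \{-1,0,+1\}^{A\times A}$ is a symmetric domain: if $\phi\in\Phi_{\pc}$ has all entries in $\{-1,0,+1\}$, then $-\phi$ does as well, so $-\phi\in\Phi_{\pc}$. This is the ``easily seen'' symmetry already recorded in the text. Note that, unlike \thmref{thm:condorcet}, \thmref{thm:cancellation} imposes no non-imposition condition on the \emph{domain}, so symmetry is the only domain property that needs verifying.

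Next I would feed the given ordinal mechanism $f$ into \thmref{thm:cancellation}. By hypothesis $f$ is homogeneous, non-imposing as a mechanism, and satisfies cancellation and participation, which are precisely the assumptions of the theorem. Since $\Phi_{\pc}$ is symmetric, the theorem immediately yields that $f$ is welfare-maximizing on $\Phi_{\pc}$, that is, $f(\phi_N)$ is welfare-maximizing for $\phi_N$ for every $N\in\mathcal{F}(\mathbb{N})$ and every $\phi_N\in\Phi_{\pc}^N$.

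Finally I would translate this conclusion back into ordinal language. A profile in $\Phi_{\pc}^N$ is exactly a canonical utility representation of an ordinal preference profile, and by definition a lottery is \emph{maximal} for such a profile precisely when it is welfare-maximizing for this canonical representation. Hence ``returns welfare-maximizing lotteries on $\Phi_{\pc}$'' and ``returns maximal lotteries'' denote the very same property, so the conclusion of the theorem is the conclusion of the corollary.

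I expect no substantive obstacle: all the content lives in \thmref{thm:cancellation}, and the corollary is a direct instantiation of it. The only points demanding any care are the (trivial) verification that $\Phi_{\pc}$ is symmetric and making explicit the definitional identification of maximal lotteries with welfare maximization under the canonical representation.
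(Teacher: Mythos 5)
Your proposal is correct and follows exactly the paper's intended route: instantiate \thmref{thm:cancellation} on the symmetric domain $\Phi_{\pc}$ and identify welfare maximization for the canonical utility representation with returning maximal lotteries. Nothing is missing.
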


We remark that Corollaries \ref{cor:condorcet} and \ref{cor:cancellation} do not hold if participation is weakened to ordinal participation. For example, the mechanism that returns the Condorcet winner if one exists and the uniform lottery over all alternatives otherwise is homogeneous, Condorcet-consistent, non-imposing and satisfies cancellation and ordinal participation. However, this mechanism violates \emph{ex post} efficiency let alone the stronger notion of ordinal efficiency, which is satisfied by maximal lotteries.

	\section*{Acknowledgments}
	This material is based upon work supported by Deutsche Forschungsgemeinschaft under grant {BR~2312/10-1}.
	
	\bibliographystyle{plainnat}
	% \bibliography{../pamas/abb,../pamas/brandt,../pamas/pamas}

		% 
		% 
		% 
		% 
		% 
		% 
		% 
		% 
		% 
		% 
		% 
		% 
		% 
		% 
		% 
		% 
		% 
		% 
		% 
		% 
		% 
		% 
		% 
		% 
		% 
		% 
		% 
		% 
		% 
		% 
		% 

\end{document}